\def\ps@IEEEtitlepagestyle{%
  \def\@oddhead{\mycopyrightnotice}%
  \def\@oddfoot{\hbox{}\@IEEEheaderstyle\leftmark\hfil}\relax
  \def\@evenhead{\@IEEEheaderstyle\thepage\hfil\leftmark\hbox{}}\relax
  \def\@evenfoot{}%
}
\def\mycopyrightnotice{%
  \begin{minipage}{\textwidth}
  \scriptsize
  Copyright~\copyright~2021 IEEE. Personal use of this material is permitted. Permission from IEEE must be obtained for all other uses, in any current or future media, including reprinting/republishing this material for advertising or promotional purposes, creating new collective works, for resale or redistribution to servers or lists, or reuse of any copyrighted component of this work in other works by sending a request to pubs-permissions@ieee.org. Accepted for publication in 2021 Military Communications Conference.
\end{minipage}
}
\newcommand*{\algrule}[1][\algorithmicindent]{%
  \makebox[#1][l]{%
    \hspace*{.2em}
    \vrule height .75\baselineskip depth .25\baselineskip
  }
}
\def\ALG@printindent{%
    \ifnum \theALG@nested>0
    \ifx\ALG@text\ALG@x@notext
    \else
    \unskip
    \ALG@printindent@tempcnta=1
    \loop
    \algrule[\csname ALG@ind@\the\ALG@printindent@tempcnta\endcsname]%
    \advance \ALG@printindent@tempcnta 1
    \ifnum \ALG@printindent@tempcnta<\numexpr\theALG@nested+1\relax
    \repeat
    \fi
    \fi
}
\patchcmd{\ALG@doentity}{\noindent\hskip\ALG@tlm}{\ALG@printindent}{}{\errmessage{failed to patch}}
\patchcmd{\ALG@doentity}{\item[]\nointerlineskip}{}{}{} 
\begin{document}
\title{Constrained Resource Allocation Problems in Communications: An Information-assisted Approach}  

\author{\IEEEauthorblockN{I. Zakir Ahmed  and Hamid Sadjadpour\\}
\IEEEauthorblockA{Department of Electrical and Computer Engineering\\
University of California, Santa Cruz\\
}
\and
\IEEEauthorblockN{Shahram Yousefi}
\IEEEauthorblockA{Department of Electrical and Computer Engineering\\
Queen's University, Canada\\
}\thanks{\rule[2pt]{0.97\linewidth}{0.4pt} \scriptsize \indent This work was partially sponsored by the Army Research Office and was accomplished under Grant Number W911NF-20-1-0253. The views and conclusions contained in this document are those of the authors and should not be interpreted as representing the official policies, either expressed or implied, of the Army Research Office or the U.S. Government.}}


%


\maketitle

\begin{abstract}
We consider a class of resource allocation problems given a set of unconditional constraints whose objective function satisfies Bellman's optimality principle. Such problems are ubiquitous in wireless communication, signal processing, and networking. These constrained combinatorial optimization problems are, in general, NP-Hard. This paper proposes two algorithms to solve this class of problems using a dynamic programming framework assisted by an information-theoretic measure. We demonstrate that the proposed algorithms ensure optimal solutions under carefully chosen conditions and use significantly reduced computational resources. We substantiate our claims by solving the power-constrained bit allocation problem in 5G massive Multiple-Input Multiple-Output receivers using the proposed approach.
\end{abstract}
\IEEEpeerreviewmaketitle

\newcommand{\Xmatrix}{
\begin{bmatrix}
\ddots  & 0     & 0 \\
0  & \frac{1}{{\sigma_i^2}} & 0 \\
0 & 0 & \ddots
\end{bmatrix}}
\newcommand{\Ymatrix}{
\begin{bmatrix}
\ddots  & 0  & 0 \\
0  & \frac{f(b_i)l_i}{\big(1-f(b_i)\big) \sigma_i^2} & 0 \\
0 & 0 & \ddots
\end{bmatrix}}
\newcommand{\Zmatrix}{
\begin{bmatrix}
\ddots  & 0  & 0 \\
0  & \frac{\sigma_i^2}{\sigma_n^2 + \frac{f(b_i)l_i}{\big(1-f(b_i)\big)}} & 0 \\
0 & 0 & \ddots
\end{bmatrix}}
\newcommand{\ZMmatrix}{
\begin{bmatrix}
\ddots  & 0  & 0 \\
0  & \frac{\sigma_i^2}{\sigma_n^2 + \frac{f(b_i)l_i}{\big(1-f(b_i)\big)}} + \frac{1}{p} & 0 \\
0 & 0 & \ddots
\end{bmatrix}}
\newcommand{\Fmatrix}{
\begin{bmatrix}
\ddots  & 0  & 0 \\
0  & 10 & 0 \\
0 & 0 & \ddots 
\end{bmatrix}}
\newcommand{\InvFmatrix}{
\begin{bmatrix}
\ddots  & 0 & 0 \\
0  & f(b_i)\big(1-f(b_i)\big)l_i & 0 \\
0 & 0 & \ddots
\end{bmatrix}}
\section{Introduction}
\label{Intro}
The constrained resource allocation problem involving discrete resources deals with finding an optimal solution chosen from a finite solution space. In addition, the solution needs to satisfy the constraints of the problem. These problems pose a considerable challenge to solve and, in general, are NP-Hard \cite{NPHard}. We define a class of these problems $H$ with the objective function (OF) satisfying the principle of optimality (PO) without considering the constraints \cite{Bellman}. The constraint functions are assumed to be neither convex nor linear in their decision variables. Nor are the constraints required to satisfy the linear independence constraint qualification (LICQ) \cite{Nocedal}. Examples of $H$ include bit allocation (BA) in massive Multiple-Input Multiple-Output (MaMIMO) receivers under power constraints \cite{Zakir6, Zakir7}, optimal resource selection for parameter estimation in MIMO radar \cite{MimoRadar}, managing  a large number of dynamic resources (devices and channels) to achieve changing objectives with multiple complex tradeoffs in the  Internet of Battlefield Things (IoBT) paradigm \cite{Kott}, multiple relay selection in cooperative communication \cite{Behrooz}, to name a few. 
\subsection{Previous works}
\indent The well-known methods like Dynamic Programming (DP), branch and bound (BB), integer programming (IP), and their variants are proposed to solve $H$. However, they place conditions on the constraints for optimality or near-optimality \cite{Bbound, LinShu, Zakir6, BboundVar, Fisher}. Many heuristic algorithms have been proposed to solve $H$. However, these methods extract a feasible approximate solution \cite{UpOFDMA,VLcom,Clonal}. Most of the work in the literature attempting to solve $H$ use tailored approaches specific to a given problem that guarantee either optimal or near-optimal solutions \cite{Tailored3,Tailored1,Tailored2}. More recently, machine learning (ML)-based approaches are gaining widespread popularity in solving these problems \cite{MLbased}. However, ML-based methods do not guarantee optimality. A unified framework that can solve this class of problems optimally with reduced computational resources is highly desirable.
\subsection{Our contribution}
\indent The contributions of this paper are as follows:
\begin{itemize}
\item we propose a dynamic programming framework to solve the general class of problems $H$,
\item inspired by the works of Tishby et al. we incorporate an information-theoretic measure to quantify the constraint satisfaction criteria for $H$ into the DP framework \cite{Tishby_RL}, and
\item we propose two algorithms that can solve $H$ optimally in probability that provide theoretical guarantees for near-optimality with the order of computational  complexity similar to the Viterbi Algorithm (VA).
\end{itemize}

\textit{Notations:}
The column vectors are represented as boldface small letters. The superscripts $T$ denote transpose. The term $\mathbb{R}$ indicates the  space of real numbers. We represent discrete random variable $X$ with probability mass function (PMF) $p(X)$ as $X \sim p(X)$ or simply $X$. A sequence of random variables are represented as boldfaced italics, that is $\bm{X} = \{ X_1, X_2,\cdots, X_N\}$.\\

\indent The rest of this paper is organized as follows. In Section \ref{Prob}, we describe the problem $H$. We describe the constraint satisfaction as information measure in Section \ref{Inf}. We recast the problem $H$ using the information measure in Section \ref{ModProb}. In Section \ref{OA}, we establish the theoretical guarantees for an optimal solution in probability. In Section \ref{DistEval}, we discuss the evaluation of the conditional and prior distribution that are required to compute the information measure. The bit allocation problem for MaMIMO is briefly described in Section \ref{BAprob}. The simulation results and the computational complexity analysis are discussed in Sections \ref{Sim} and \ref{CCA}, respectively, followed by the conclusion in Section \ref{Conc}.
\section{Problem setup}
\label{Prob}
We define $H$ as stated below, where $\bold{x}^*$ is the optimal solution to $\eqref{cdo_eq1}$ if it exists. 
\begin{equation}\label{cdo_eq1}
\begin{split}
&\underbrace{\text{max}}_{\substack{\bold{x}}} f(\bold{x}),\\
\text{ such that }&c_i(\bold{x}) \le \alpha_i;\text{ for }1 \le i \le Q_{I},\\
&h_j(\bold{x}) = \beta_j;\text{ for }1 \le j \le Q_E,
\end{split}
\end{equation}
where $\bold{x} = [x_1, x_2,\cdots,x_N]^T \in \mathbb{R}^N$. The OF $f(\bold{x})$ satisfies the PO without the constraint functions $c_i(\bold{x})$ and $h_j(\bold{x})$, where $\alpha_i, \beta_j \in \mathbb{R}, \forall i,j$ \cite{Bellman}. Thus, it can be written in the form $f(\bold{x}) = \sum_{i=1}^{N} b_i \phi_i(x_i)$ where $b_i \in \mathbb{R}$ are constants, $x_i \in \mathcal{X}$ can only take values from the set $\mathcal{X}$ whose cardinality is $M$, and, $\phi_i:\mathcal{X} \longrightarrow \mathcal{Y},\text{ for } 1\le i \le N$. The mapping $\phi_i$ need not be known in closed form. The constraint functions $c_i(\bold{x})$ and $h_j(\bold{x})$ are not limited to linear mappings in $x_i$, nor are they convex or need to not satisfy LICQ \cite{Nocedal}. The terms $Q_{I}$ and $Q_E$ represent the number of inequality and equality constraints, respectively.

The solution to the problem $H$ defined in \eqref{cdo_eq1} can be visualized as a finite horizon Markov decision process (MDP) which is defined using a tuple $(X,\mathcal{A},p,r,q)$, where $X$ denotes the finite set of states, $\mathcal{A}$ is the finite set of actions, $P:X \times \mathcal{A} \times X^{'} \rightarrow [0,1]$ are the state transition probabilities $p_{x,a}(x^{'})$ that a state $x^{'}$ is attained when an action $a \in \mathcal{A}$ is taken in state $x$ where $x, x^{'} \in X$. A reward $r:X \times \mathcal{A} \rightarrow \mathbb{R}$ is associated with an $a \in \mathcal{A}$ from a state $x \in X$. The prior distribution $q$ is chosen such that it is a representation of the constraints of $H$. We consider the actions $a \in \mathcal{A}$ to be deterministic given $p$ and $q$.

\newtheorem{definition}{Definition}
\begin{definition}
We define a solution (or path) $\pi = \{X_1=x_1, X_2=x_2, \cdots, X_N=x_N \}$ as a sequence of states attained as a consequence of decisions $a \in A$ taken to maximize the cumulative reward in the MDP.
\end{definition}
That is, $\pi = \{X_1=x_1, X_2=x_2, \cdots, X_N=x_N \}$, where $x_1, x_2, \cdots, x_N \in \mathcal{X}$, or simply $\pi = \{x_1, x_2, \cdots, x_N \}$. Also, we represent a path $\pi_i$ as a sequence of partially observed MDP until the stage $i$. That is $\pi_i = \{ X_1=x_1, X_2=x_2, \cdots, X_i = x_i \}$, where  $x_1, x_2  \cdots, x_i \in \mathcal{X}$, or simply $\pi_i = \{ x_1, x_2, \cdots, x_i \}$. We also write the $k^{\text{th}}$ element of the sequence $\pi$ as $\pi(k)$.
\subsection{Constraint satisfaction function}\label{csf}
We define the constraint satisfaction function (CSF) $A(\cdot)$ such that 
\begin{equation}\label{thm0_eq1}
    A(\pi)= 
\begin{cases}
    1,& \text{if } \pi \text{ satisfies all the constraints }\\ 
    &c_i(\pi), h_j(\pi) \text{ of $H$; for all } i,j,\\
    0,              & \text{otherwise.}
\end{cases}
\end{equation}
We also define $A(\cdot)$ on a partially observed sequence $\pi_k$ as $A(\pi_k) = 1$ for $1 \le k < N$ if there exists at least one forward looking subsequence $\{\pi(k+1),\pi(k+2),\cdots,\pi(N)\}$ such that $\{ \pi_k, \pi(k+1),\pi(k+2),\cdots,\pi(N) \}$ satisfies all the constraints. This is represented as
\begin{equation}\label{thm0_eq2}
    A(\pi_k)= 
\begin{cases}
    1,& \text{if there exists at least one subsequence }\\
     &\{ \pi(k+1),\pi(k+2),\cdots,\pi(N) \} \text{ defined}\\
    &\text{ above, that satisfies all the constraints }\\ 
    &c_i(\pi_k), h_j(\pi_k) \text{ for all } i,j.\\
    0,              & \text{otherwise.}
\end{cases}
\end{equation}
\section{Constraint satisfaction as an Information measure}\label{Inf}
A measure of information called Information-to-go $(\mathcal{I}_g)$ associated with a sequence that specifies cumulated information processing cost or bandwidth required to quantify the future decision and action sequence was introduced in \cite{Tishby_RL}. The measure $(\mathcal{I}_g)$ defines how many bits on average the system needs to specify the future states in an MDP (or its informational regret) with respect to the prior. This is written as
\begin{equation}\label{eq4}
\begin{split}
\mathcal{I}^{\pi_m}(X_m) &=\\
&\mathbb{E}_{p(X_{m+1}, \cdots, X_N | X_m)}\log\frac{p(X_{m+1}, \cdots, X_N | X_m)}{q(X_{m+1}, \cdots, X_N)},
\end{split}
\end{equation}
where $p(X_{m+1}, X_{m+2}, \cdots, X_N | X_m)$ is the conditional distribution of the future looking sequence given a sequence $\pi_m$, and the fixed prior $q(X_{m+1}, X_{m+2}, \cdots, X_N)$.
Inspired by \cite{Tishby_RL} we propose a modified $I_g^{\pi}$ defined in \eqref{eq5} that measures the constraint satisfaction criterion. We write
\begin{equation}\label{eq5}
\begin{split}
I_g^{\pi_m}(X_m) &\triangleq \\
&\mathbb{E}_{p(X_{m+1}, \cdots, X_N | X_m)}\log\frac{p(X_{m+1}, \cdots, X_N | X_m)}{q(X_{m+1}, \cdots, X_N| X_m)}.
\end{split}
\end{equation}
Effectively, the term $I_g^{\pi_m}(X_m)$ denotes the KL divergence between the distribution of future looking sequence $\{ X_{m+1}, \cdots, X_N \}$ given $X_m$ with respect to the known prior conditional distribution of the successive future states $q(X_{m+1}, X_{m+2}, \cdots, X_N | X_m)$. The $I_g^{\pi}(X_m)$ can be thought of as the information processing cost in bits to ensure constraint satisfaction in pursuing a partially observed path $\pi_m$ going into the indefinite future with respect to the known conditional prior $q(X_{m+1}, X_{m+2}, \cdots, X_N | X_m)$.\\
\indent Intuitively, when $I_g^{\pi_m}(X_m) \approx 0$ implies least information is required to pursue the path $\pi_m$ to satisfy the CSF $A(\pi_m)$. On the other hand, a large value of $I_g^{\pi_m}(X_m)$ implies maximum information required to make the decision (or inability to make a decision) to see if the CSF $A(\pi_m)$ is satisfied when pursuing the path $\pi_m$.\\ 
\indent We write the conditionals $p(X_{m+1}, X_{m+2}, \cdots, X_N | X_m)$ as simply $p$ and the conditional priors $q(X_{m+1}, X_{m+2}, \cdots, X_N | X_m)$ as  $q$ for compact representation. The details pertaining to the evaluation of the conditionals $p$ and the prior $q$ are discussed in Section \ref{DistEval}.
\section{Problem setup with information measure}\label{ModProb}
Using the definitions and notations defined in Section \ref{Prob} and \ref{Inf} we rewrite the problem $H$ as
\begin{equation}\label{thm0_eq0}
\begin{split}
&\underbrace{\text{max}}_{\substack{\pi; A(\pi) > 0}} f^{\pi}(\bm{X}),
\end{split}
\end{equation} 
where $f^{\pi}(\bm{X}) = \sum_{i=1}^N b_i\phi_i(X_i)$ for path $\pi$.\\
\indent Decoupling the constraints from \eqref{thm0_eq0} and absorbing the same into \eqref{eq5}, the class of problems $H$ defined using \eqref{cdo_eq1} can be recast to minimize $I_g^{\pi}(\bm{X})$ that ensures the constraint satisfaction criterion and at the same time maximize the reward $f^{\pi}(\bm{X})$. We define a functional $G^{\pi}$ as \cite{Tishby_RL}
\begin{equation}\label{eq6}
\begin{split}
G^{\pi}(\bm{X},\beta) \triangleq  I_g^{\pi}(\bm{X}) - \beta f^{\pi}(\bm{X}),
\end{split}
\end{equation}
where $\beta$ is the Lagrange multiplier. The modified problem is written as
\begin{equation}\label{eq7}
\begin{split}
G^{\pi^*}(\bm{X},\beta) &\triangleq \underbrace{\text{min}}_{\substack{\pi}}\Big\{ I_g^{\pi}(\bm{X}) - \beta f^{\pi}(\bm{X}) \Big\},\text{ or}\\
\pi^* &= \underbrace{\text{argmin}}_{\substack{\pi}}\Big\{ I_g^{\pi}(\bm{X}) - \beta f^{\pi}(\bm{X}) \Big\}.
\end{split}
\end{equation}
Here $\pi^*$ is the optimal solution to \eqref{cdo_eq1} in probability.

\begin{definition}\label{def2}
  We say that the solution $\pi^{\beta_o}$ is close to $\pi^*$ in probability when $Pr\Big\{ \left\Vert\ \pi^{\beta_o} - \pi^* \right\Vert_2 \le \epsilon \Big\} \ge 1-\delta,\text{ for }\beta_o \in (\beta_L,\beta_U)$, where $\epsilon, \delta$ are small numbers close to zero.
\end{definition}
\vspace{-0.2in}
\section{Optimality Analysis}\label{OA}
If \eqref{eq7} satisfies the Bellman's optimality criterion, we can use the dynamic programming framework to solve \eqref{eq7}. We say that the value function $G^{\pi}(\bm{X},\beta)$ is said to satisfy Bellman's principle of optimality (PO) under the following conditions \cite{Bellman,Takashi,sdp,BellFail}.\\
\indent \textit{(1)} The value function $G^{\pi}(\bm{X},\beta)$ can be broken down into two parts consisting of an immediate reward component (subproblem) and a scaled (discounted) future value function for a given $\beta$.\\
\indent \textit{(2)} The subsolution $\pi^*_k$ of the optimal solution $\pi^*$ obtained by solving an incompletely observed MDP are themselves optimal solutions for their subproblems. This is illustrated below.\\

If $\pi^* = \underbrace{\text{argmin}}_{\substack{\{\pi(i)\}_{i=1}^{N}}} G^{\pi}(\bm{X},\beta)$ for some $\beta = \beta_o$, and if we can express $G^{\pi_k}(X_k,\beta_o) = H^{\pi_k}(X_k,\beta_o) + G^{\pi_{k+1}}(X_{k+1},\beta_o)$, where $H^{\pi_k}(X_k,\beta_o)$ is the subproblem defined based on the partial observation of the MDP until stage $k$, and $G^{\pi_{k+1}}(X_{k+1},\beta_o)$ is the future value function then we have
\begin{equation}\label{eq7a}
\begin{split}
&\pi^* = \underbrace{\text{argmin}}_{\substack{\{\pi(i)\}_{i=1}^{N}}} \Big\{H^{\pi_k}(X_k,\beta_o) + G^{\pi_{k+1}}(X_{k+1},\beta_o) \Big\}\\
&= \underbrace{\text{argmin}}_{\substack{\{\pi(i)\}_{i=1}^{k}}} \Big\{ H^{\pi_k}(X_k,\beta_o) + \underbrace{\text{argmin}}_{\substack{\{\pi(i)\}_{i=k+1}^{N}}} G^{\pi_{k+1}}(X_{k+1},\beta_o) \Big\}.
\end{split}
\end{equation}
Observing \eqref{eq7a}, we say that $\underbrace{\text{argmin}}_{\substack{\pi}} G^{\pi}(\bm{X},\beta)$ satisfies PO if the solution to the subproblem can be written as $\pi^*_k = \underbrace{\text{argmin}}_{\substack{\{\pi(i)\}_{i=1}^{k}}} \Big\{ H^{\pi_k}(X_k,\beta_o) \Big\}$. The subsolution $\pi^*_k$ is part of the optimal solution $\pi^*$.
 
We first show that the problem $H$ in \eqref{cdo_eq1} does not satisfy the principle of optimality (PO) using theorem \ref{thm0}, later we  argue that the modified problem \eqref{eq7} indeed satisfies PO for some $\beta_o \in (\beta_L,\beta_U)$.

\newtheorem{theorem}{Theorem}
\begin{theorem}\label{thm0}
The problem $H$ described using \eqref{cdo_eq1} does not satisfy the PO. 
\end{theorem}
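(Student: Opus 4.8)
The plan is to prove the statement by counterexample, which suffices because $H$ is a \emph{class} of problems: it is enough to exhibit one instance of \eqref{cdo_eq1} that violates PO. Recall the two requirements for PO stated above: (1) the value function decomposes into an immediate-reward subproblem plus a discounted future value, and (2) every prefix $\pi^*_k$ of an optimal path $\pi^*$ is itself optimal for its subproblem. Since the objective $f(\bold{x})=\sum_{i=1}^{N} b_i\phi_i(x_i)$ is additively separable, the decomposition in (1) holds for the reward, with the stage-$k$ subproblem being the maximization of the partial sum $\sum_{i=1}^{k} b_i\phi_i(x_i)$. The crux is that this subproblem is \emph{blind} to the constraints of \eqref{cdo_eq1}: each $c_i$ and $h_j$ is a function of the full vector $\bold{x}\in\mathbb{R}^N$ and, for the class $H$, is in general non-separable in the $x_i$ (and not required to be convex or LICQ-regular), so it cannot be meaningfully restricted to the first $k$ coordinates. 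Feasibility of a prefix genuinely depends on the suffix — precisely what motivated the forward-looking CSF $A(\pi_k)$ in \eqref{thm0_eq2}.

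I would then construct an explicit minimal instance. Take $N=2$, $\mathcal{X}=\{0,1\}$, $\phi_i(x)=x$, $b_1=1$, $b_2=2$, $Q_E=0$, and a single nonlinear coupling constraint $c_1(\bold{x})=x_1 e^{x_2}\le e^{1/2}$. The feasible paths are $(0,0)$, $(0,1)$, $(1,0)$, so the unique maximizer is $\pi^*=(0,1)$ with $f^{\pi^*}=2$ and prefix $\pi^*_1=(0)$. But the stage-$1$ subproblem $\max_{x_1\in\mathcal{X}} b_1 x_1$ has unique solution $x_1=1$. Hence $\pi^*_1$ is not optimal for its subproblem, contradicting (2); equivalently, the Bellman recursion that first commits to the reward-optimal prefix $x_1=1$ is then forced to $x_2=0$ and returns the strictly suboptimal value $f=1$. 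The same conclusion survives any attempt to fold the constraint into the subproblem, since $c_1$ evaluated on a length-$1$ path is not even defined.

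The main obstacle is therefore not computational but definitional: one must argue that \emph{no} sensible choice of ``stage-$k$ subproblem'' can rescue PO. I would handle this via a short dichotomy — a subproblem either discards the constraints (then the instance above yields a strictly suboptimal or infeasible path) or tries to keep them (then it is ill-posed, since $c_i,h_j$ act on length-$N$ vectors, not on partial paths $\pi_k$, and do not separate). Neither branch gives optimal substructure, so the class $H$ in \eqref{cdo_eq1} does not satisfy PO, establishing the theorem. This also explains why the subsequent sections recast $H$ as \eqref{eq7} through the information measure $I_g^{\pi}$: absorbing the constraints into $I_g^{\pi}$ is what restores a decomposable value function amenable to DP.
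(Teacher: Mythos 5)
Your argument is essentially the paper's own: the paper likewise notes the additive decomposition $f^{\pi}(\bm{X})=\psi^{\pi_k}+f^{\pi_{k+1}}(X_{k+1})$ and then observes that an infeasible path $\hat{\pi}$ with $f^{\hat{\pi}}(\bm{X})>f^{\pi^*}(\bm{X})$, $A(\hat{\pi})=0$ but $A(\hat{\pi}_k)>0$ makes the greedy subproblem $\text{argmax}_{\pi_k;A(\pi_k)>0}\,\psi^{\pi_k}$ commit to a prefix that is not a prefix of $\pi^*$. Your explicit $N=2$ instance (where $\hat{\pi}=(1,1)$ plays exactly that role, and the stage-$1$ subproblem still picks $x_1=1$ even under the forward-looking CSF of \eqref{thm0_eq2}) concretely instantiates the scenario the paper only describes hypothetically, so it is the same proof made slightly stronger.
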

\begin{proof}
From \eqref{cdo_eq1}, it is easy to see that $f^{\pi_m}(X_m) = b_m \phi_m (X_m) + f^{\pi_{m+1}}(X_{m+1})$. Using this recursion, we can write the value function in \eqref{thm0_eq0} as
\begin{equation}\label{thm0_eq3}
\begin{split}
f^{\pi}(\bm{X}) &= f^{\pi_1}(X_1),\nonumber
\end{split}
\end{equation}
\begin{equation}\label{thm0_eq3}
\begin{split}
&= \psi^{\pi_k} + f^{\pi_{k+1}}(X_{k+1}),
\end{split}
\end{equation}
where $\psi^{\pi_k} = \sum_{i=1}^k b_i \phi_i (X_i = \pi(i))$. Given that $\pi^* = \underbrace{\text{argmax}}_{\substack{\pi; A(\pi) > 0}} f^{\pi}(\bm{X})$ we say that $f^{\pi}(\bm{X})$ satisfies PO if the sequence of subsolutions $\pi^*_k$ to the subproblems $\underbrace{\text{argmax}}_{\substack{\pi_k; A(\pi_k) > 0}} \psi^{\pi_k}$ is part of the optimal solution $\pi^*$ for all $1 \le k \le N$.\\
\indent However if we have an infeasible solution $\hat{\pi}$ such that $f^{\hat{\pi}}(\bm{X}) > f^{\pi^*}(\bm{X})$, and $A(\hat{\pi}) = 0$ but the solution $\hat{\pi}$ satisfies the CSF $A(\hat{\pi}_k) > 0$ at some intermediate stage $k$, then the subproblem $\underbrace{\text{argmax}}_{\substack{\pi_k; A(\pi_k) > 0}} \psi^{\pi_k}$ will not pick the optimal sequence $\pi^*$ going forward into the future stages beyond $k$. This scenario is a consequence of placing no conditions on the objective and the constraint functions of $H$. Thus the solution obtained by solving a sequence of subproblems $\underbrace{\text{argmax}}_{\substack{\pi_k; A(\pi_k) > 0}} \psi^{\pi_k}$ will be different from $\pi^*$.
\end{proof}

It can be shown \cite{Tishby_RL} that \eqref{eq7} can be expressed recursively as 
\begin{equation}\label{thm1_eq3}
\begin{split}
G^{\pi_k}(X_k,\beta) &\triangleq  H^{\pi_k}(X_k,\beta) + G^{\pi_{k+1}}(X_{k+1},\beta),
\end{split}
\end{equation}
where 
\begin{equation}\label{thm1_eq5}
\begin{split}
H^{\pi_k}(X_k,\beta) &= D^{\pi_k} - \beta \psi^{\pi_k}(X_k),\\
D^{\pi_k} &\triangleq D_{KL}^{\pi_k}(p(X_1,\cdots,X_{k})||q(X_1,\cdots,X_{k})),\\
&\triangleq \sum_{i=1}^{k} \mathbb{E}_{p(X_{i+1}|X_i=\pi(i))} \log \frac{p(X_{i+1}|X_i=\pi(i))}{q(X_{i+1}|X_i=\pi(i))},\\
\psi^{\pi_k}(X_k) &= \sum_{i=1}^{k} b_i \phi_i(X_i=\pi(i)).
\end{split}
\end{equation}

In \eqref{eq7}, it can be seen that a larger value of $\beta$ emphasizes the reward against the constraint satisfaction criteria $I_g^{\pi}(\bm{X})$,  whereas a lower value of $\beta$ ensures the constraint satisfaction criteria has prominence over the reward. It can also be shown that the reward $f^{\pi}(\bm{X})$ vs. the information-to-go $I_g^{\pi}(\bm{X})$ for different values of $\beta$ has a monotonic non-decreasing relationship \cite{Thomas,Tishby_IB,Tishby_RL}. Keeping in mind the above points, it is easy to show that if $\pi^* = \underbrace{\text{argmin}}_{\substack{\{\pi(i)\}_{i=1}^{N}}} G^{\pi}(\bm{X},\beta_o)$ and $\pi^{\beta_o} = \underbrace{\text{argmin}}_{\substack{\{\pi(i)\}_{i=1}^{N}}} \Big\{ H^{\pi_N}(X_N,\beta_o) \Big\}$, then there always exists a range of $\beta$'s such that $\beta_o \in (\beta_L,\beta_U)$, where $Pr\Big\{ \left\Vert\ \pi^{\beta_o} - \pi^* \right\Vert_2 \le \epsilon \Big\} \ge 1-\delta$. Here $\epsilon$ and $\delta$ are small numbers close to zero. Hence implying that \eqref{eq7} indeed satisfies Bellman's PO for this range of $\beta$.

A trellis based VA can be used to find the optimal solution $\pi^*$ (in probability) to \eqref{eq7} \cite{Viterbi}. This would necessitate the computation of the path metric $PM_{\pi_{m+1}}$ at stage $m+1$ as
\begin{equation}\label{eq10a}
\begin{split}
PM_{\pi_{m+1}} &= \mathbb{E}_{p(X_{m+1}|X_{m}=\pi(m))} \log \Bigg[ \frac{p(X_{m+1}|X_{m}=\pi(m))}{q(X_{m+1}|X_{m}=\pi(m))}\Bigg]\\
&- \beta b_i \phi_{m}(X_{m+1}),
\end{split}
\end{equation}
for a path $\pi_m$ that is incident on the node $x_j \in  \mathcal{X}$ at stage $m+1$ of the trellis structure of the VA; and then select the path that has a minimum value among them. This is the well known Add-Compare-Select (ACS) operation in the VA.

A description of $p(X_{m+1}|X_{m})$ and $q(X_{m+1}|X_{m})$ at every stage of the trellis will suffice to compute the path metric in \eqref{eq10a}.

\section{Evaluating the distributions $p$ and $q$}\label{DistEval}
In this section we will discuss the methods to evaluate the distributions $p(X_{m+1}|X_{m})$ and $q(X_{m+1}|X_{m}), \forall m \in (1,N-1)$.
 
\subsection{Evaluation of the priors $q$}\label{qeval}
To evaluate the conditional priors $q(X_{t+1}=x_i | X_{t}=x_j) \forall t \in (1,N); x_i,x_j \in \mathcal{X}$, we sample a set of $K$ solutions from the exhaustive search space $B_{set}$ of problem $H$ such that they satisfy $A(\pi) > 0$. We then identify $N_1$ solutions $\{ \pi^i \}_{i=1}^{N_1}$ out of the $K$ selected solutions that have maximum reward, that is $f(\pi^1) \ge f(\pi^2), \ge \cdots \ge f(\pi^{N_1}) > f(\pi^{N_1+1}) \ge \cdots \ge f(\pi^K)$. Using these $N_1$ subset of solutions we evaluate
\begin{equation}\label{eq13a}
\begin{split}
q(X_{t+1} &= x_i | X_{t}=x_j) = \frac{F(\{ X_{t+1}=x_i | X_{t}=x_j \})}{N_1}\\
&\forall t \in (1,N) ; x_i \in \mathcal{X},
\end{split}
\end{equation}
where $F(\{ X_{t+1}=x_i | X_{t}=x_j \})$ returns the number of times the event $\{ X_{t+1}=x_i | X_{t}=x_j \}$ occur among the $N_1$ solutions. It follows that when $K \rightarrow |B_{set}|$, and for a small $N_1$ we have $q(\pi^*) \rightarrow 1$.

\subsection{Evaluation of the conditional $p$}\label{peval}
We describe two methods to evaluate the conditional distribution $p(X_{t+1}|X_t)$. In the first approach the conditionals $p(X_{t+1}|X_t)$ are derived at stage $t$ of the trellis traversal using the constraints $c_i(\cdot)$, $h_j(\cdot)$, the starting distribution of states $q(X_1)$, and the path metrics $PM_{\pi_t}$. The evaluation of the conditionals $p$ for the BA problem in MaMIMO is illustrated in  Section \ref{BAprob}.\\
\indent In the second approach, we make use of the well known iterative Blahut-Arimoto algorithm to obtain $p(X_{t+1}|X_t)$ at stage $t$ of the trellis traversal \cite{Thomas,Tishby_IB}. It can be shown that by taking derivative of $G^{\pi}(\bm{X},\beta)$ with respect to $\pi$ and then setting the gradient of $G^{\pi}$ to $0$, the equation \eqref{eq7} satisfies the equations shown below \cite{Tishby_RL,Tishby_IB,BArd, BAcap}.
\begin{equation}\label{eq14}
\begin{split}
&p^{(k)}(X_{t} = x_i) =\\
&\sum_{x_j \in \mathcal{X}}p(X_{t-1}=x_j)p^{(k-1)}(X_{t}=x_i|X_{t-1}=x_j),\\
&p^{(k)}(X_{t}=x_i|X_{t-1}=x_j) = \\
&\frac{p^{(k)}(X_t=x_i)\exp (-\beta G^{\pi_{t-1}}(X_t,\beta))}{\sum_{x_l \in \mathcal{X}}p^{(k)}(X_t=x_l)\exp (-\beta G^{\pi_{t-1}}(X_t,\beta))},
\end{split}
\end{equation}
where $k$ is the iteration number. It is also worth noting that the problem \eqref{eq7} has an analogy to the variant of the rate-distortion problem in information theory. That is, the \eqref{eq7} can be visualized as
\begin{equation}\label{RDprob}
\begin{split}
\underbrace{\text{min}}_{\substack{p(X_{t+1}|X_t)}}\Big\{ I_g^{\pi}(X_t) \Big\}\text{ such that }f^{\pi}(X_{t}) \ge D,
\end{split}
\end{equation}
where $D$ is some minimum reward that needs to be guaranteed. The solution to \eqref{RDprob} is the same set of self-consistent equations described in \eqref{eq14}.\\
\indent We propose two algorithms based on the way the conditional $p$ is constructed above. In the first variant, the conditional $p$ is evaluated using the constraints of the problem $H$. We call this Information-assisted DP (IADP-specific). In the second variant, the conditional $p$ is derived using the well known Blahut-Arimoto algorithm (BAA) \cite{Tishby_IB}. We call this algorithm IADP-BAA. Both these algorithms are traditional VA frameworks that evaluate the path metrics as defined in \eqref{eq10a} \cite{Viterbi}. However, an optimal solution is guaranteed for some $\beta_o \in (\beta_L,\beta_U)$. The proposed algorithms are run for different values of $\beta$ chosen using a binary search (BS) method \cite{Ndale}.
\section{ADC Bit Allocation for MaMIMO}\label{BAprob}
In this section, we very briefly describe the problem of BA for MaMIMO. The ADC BA problem is to assign the number of bits to be used by Variable-Resolution ADCs on different Radio Frequency (RF) paths of the MaMIMO receivers. An optimal BA ensures that the performance of the receiver is maximized under a non-linear power constraint. It is to be noted that the OF is non-linear. In \cite{Zakir7}, the authors reduce this to a problem in $H$, which is described as
\begin{equation}\label{cdo_sim_ba1}
\underbrace{\text{max}}_{\substack{\{x_i\}_{i=1}^N;A(\bold{x}) > 0}}\Big\{\sum_{i=1}^{N} \frac{a_i^2}{b_i^2+d_i2^{x_i}}\Big\},
\end{equation}
where $a_i$, $b_i$, and $d_i$ are constants $\in \mathbb{R}$ that represent channel singular value, noise power, and coefficient of quantization noise due to bit allocation $x_i$ on the $i^{th}$ RF path, respectively. Here $N$ is the number of RF paths in the receiver. The bits $x_i$ can take values from the set $\mathcal{X} = \{1,2,3,4 \}$. The CSF $A(\bold{x}) > 0$ iff the power constraint $\sum_{i=1}^{N} 2^{x_i} \le P_b$, and bit-ordering  constraint $x_1 \ge x_2  \ge \cdots  \ge x_N$ are satisfied. The total ADC power budget is $P_b$. Hence we have
\begin{equation}\label{cdo_sim3}
    A(\bold{x}) = 
\begin{cases}
    1,\text{ if } &\sum_{i=1}^{N} 2^{x_i} \le P_b,\\
       &x_1 \ge x_2  \ge \cdots  \ge x_N.\\
    0, &\text{ Otherwise}.\\
\end{cases}
\end{equation}
\subsection{Problem specific evaluation of the conditional $p$}\label{pspec}
We define $p(X_{t+1}|X_t)$ between the stages $t$ and $t+1$ for a given path $\pi_t$ based on the two constraints in \eqref{cdo_sim3}. We know the elements of the path $\pi_t$ for stages $1, \cdots t$. Thus we write
\begin{equation}\label{eq12}
\begin{split}
&p(X_{t+1} = x_i |X_t) =\\
&\frac{  S\bigg(P_b - \Big( \sum_{k=1}^t 2^{x_k} + 2^{x_i} \Big) \bigg) + n_i}{\sum_{x_j \in \mathcal{X}} \Bigg[ S\bigg( P_b - \Big( \sum_{k=1}^t 2^{x_k} + 2^{x_j} \Big) \bigg) + n_j \Bigg] } \forall x_i \in \mathcal{X},
\end{split}
\end{equation}
where $S(x) = \frac{1}{1+e^{-x}}$ is a sigmoid function that bounds domain of $S$ in $[0,1]$ for $x \in (-\infty,\infty)$. It is easy to see that the term $P_b - \big( \sum_{k=1}^t 2^{x_k} + 2^{x_i} \big)$ represents the residual power available for the path $\pi_t$ to ensure the power constraint is satisfied. The larger the term $P_b - \big( \sum_{k=1}^t 2^{x_k} + 2^{x_i} \big)$, the greater the chance of satisfying the power constraint. The condition $P_b - \big( \sum_{k=1}^t 2^{x_k} + 2^{x_i} \big) \le 0$ indicates that the power budget is exhausted for the path $\pi_t$. The normalization term in the denominator of \eqref{eq12} ensures that $\sum_{x_i \in \mathcal{X}} p(X_{t+1} = x_i |X_t) = 1$. In addition, we add noise $n_i \sim \mathcal{N}(0,\,\sigma^{2})$ with a very small variance $\sigma^{2}$ to ensure randomness in the distribution. The probabilities $p(X_{t+1} = x_i |X_t)$ can be efficiently computed on the fly for the path $\pi_t$ at stage $t$ during the trellis traversal in VA. The constraint $x_1 \ge x_2  \ge \cdots  \ge x_N$ is taken care when
\begin{equation}\label{eq12a}
\begin{split}
p(X_{t+1} = x_i |X_t = x_j) = 0 \text{ when } x_i < x_j ; \forall x_i,x_j \in \mathcal{X}.
\end{split}
\end{equation} 

\section{Simulations} 
\label{Sim}
We use the proposed algorithms to analyze the BA problem in massive MIMO described in Section \ref{BAprob}. We consider the number of RF paths $N=8$ \cite{Zakir6}. We set the power budget $P_b = 32$, which is the normalized power spent on having 2-bit ADCs on all the RF paths. We sweep the value of $\beta \in [0,10]$ in steps of $0.01$ for analysis for each of the proposed methods. The solution $\pi^{\beta}$ obtained for each $\beta$ with IADP-specific and IADP-BAA is shown in the Table \ref{tab1}. A plot of the trade-off curve between the reward $f^{\pi}(\bm{X})$ and the CSF criterion (Information-to-go) $I_g^{\pi}(\bm{X})$ for various values of $\beta$ are shown in the Fig \ref{fig3}. It can be seen that both the algorithm achieve optimal solution when $\beta \in [0.08,4.3]$ for IADP-specific, and when $\beta \in [0.02,4.31]$ for IADP-BAA. This observation corroborates our theoretical analysis as discussed in Section \ref{OA}. We also use a nonlinear BB (NLBB) algorithm with branching and pruning based on dominance and constraint satisfaction to solve the BA problem \cite{Bbound}. The solutions obtained with the proposed methods, NLBB, and exhaustive search (ES) are shown in Table \ref{tab2}. 

\begin{table}[t]
\caption{\scriptsize The solution $\pi^{\beta}$, reward and the power for various values of $\beta$ using the proposed Algorithms.}
\label{tab1}
\begin{center}
\begin{sc}
\resizebox{\columnwidth}{!}{%
\begin{tabular}{ |c|c|c|c|}
\hline
\tiny $N=8$ & \tiny Solution $\pi^{\beta}$ (IADP-Specific) & \tiny Reward & \tiny Power (normalized) \\ 
\hline
\tiny $\beta = [0,0.07]$  & \tiny $\{4,1,1,1,1,1,1,1\}$ & \tiny $17.543$ & \tiny $30$ \\
\hline
\tiny $\beta = [0.08,4.3]$  & \tiny $\{4,2,1,1,1,1,1,1\}$ & \tiny $18.0081$ & \tiny $32$ \\
\hline
\tiny $\beta = [4.31,10.0]$  & \tiny $\{4,4,4,4,4,4,1,1\}$ & \tiny $25.6008$ & \tiny $100$ \\
\hline
\end{tabular}%
}
\end{sc}
\end{center}
\begin{center}
\begin{sc}
\resizebox{\columnwidth}{!}{%
\begin{tabular}{ |c|c|c|c|}
\hline
\tiny $N=8$ & \tiny Solution $\pi^{\beta}$ (IADP-BAA) & \tiny Reward & \tiny Power (normalized) \\ 
\hline
\tiny $\beta = [0,0.01]$  & \tiny $\{4,1,1,1,1,1,1,1\}$ & \tiny $17.543$ & \tiny $30$ \\
\hline
\tiny $\beta = [0.02,4.31]$  & \tiny $\{4,2,1,1,1,1,1,1\}$ & \tiny $18.0081$ & \tiny $32$ \\
\hline
\tiny $\beta = [4.32,10.0]$  & \tiny $\{4,4,4,4,4,4,1,1\}$ & \tiny $25.6008$ & \tiny $100$ \\
\hline
\end{tabular}%
}
\end{sc}
\vskip 0.1in
\end{center}
\end{table}
\begin{table}[t]
\caption{\scriptsize The solution $\pi$ obtained using IADP-Specific, IADP-BAA, NLBB, and ES Algorithms.}
\label{tab2}
\begin{center}
\begin{sc}
\resizebox{\columnwidth}{!}{%
\begin{tabular}{ |c|c|c|c|}
\hline
\tiny Algorithm & \tiny Solution $\pi$  & \tiny Reward & \tiny Power (normalized) \\ 
\hline
\tiny IADP-Specific  & \tiny $\{4,2,1,1,1,1,1,1\}$ & \tiny $18.0081$ & \tiny $32$ \\
\hline
\tiny IADP-BAA  & \tiny $\{4,2,1,1,1,1,1,1\}$ & \tiny $18.0081$ & \tiny $32$ \\
\hline
\tiny NLBB  & \tiny $\{4,1,1,1,1,1,1,1\}$ & \tiny $17.543$ & \tiny $30$ \\
\hline
\tiny Exhaustive search  & \tiny $\{4,2,1,1,1,1,1,1\}$ & \tiny $18.0081$ & \tiny $32$ \\
\hline
\end{tabular}%
}
\end{sc}
\end{center}
\end{table}

\begin{figure}[t!]
\centering
\includegraphics[scale=0.4]{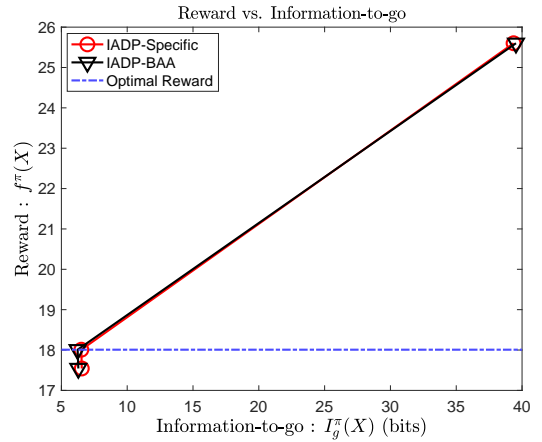}
\caption{\small Reward vs. Information-to-go for BA MaMIMO problem for $N=8$.}
\label{fig3}
\end{figure}

\section{Computational Complexity Analysis}
\label{CCA}
This section compares the computational complexity (CC) of the proposed IADP-Specific and IADP-BAA methods, NLBB, and the ES algorithms.\\
\indent \textit{(i) Exhaustive search (ES): } The total number of solutions in the $B_{\text{set}} = M^N$, and hence it has a CC of $O(M^N)$.\\ 
\indent \textit{(ii) NLBB: } Obtaining an exact solution using NLBB has a worst-case computational complexity similar to that of the ES, which is $O(M^N)$.\\ 
\indent \textit{(iii) IADP-Specific: } The total number of ACS evaluations for an $M$ state trellis with horizon length of $N$ is $NM^2$, and a total of $(N-1)M^2$ evaluations are required for $p(X_{m+1}|X_m), \forall m \in (1,N-1)$. For the evaluation of the priors $q$ as discussed in Section \ref{qeval} we need $K$ solutions to be sampled, and hence the complexity is $K$. It can be shown that for a BS algorithm with an exit range threshold $T_{\text{Range}} = \beta_U - \beta_L$, and for a maximum value of beta $\beta_{\text{Max}}$, the average number of searches required is $\log_2(\frac{\beta_{\text{Max}}}{T_{\text{Range}}})$. Hence the overall computations required for IADP-specific is
\begin{equation}\label{eq_cc1}
\begin{split}
T_{\text{IADP-specific}} = (NM^2 + (N-1)M^2) \log_2\Big(\frac{\beta_{\text{Max}}}{T_{\text{Range}}}\Big) + K.
\end{split}
\end{equation}
\indent \textit{(iv) IADP-BAA: } For the IADP-BAA, the only difference compared to IADP-Specific is the computation of the conditionals $p$. A total of $N_{\text{iter}}(N-1)M^2$ computation is required for $p$, where $N_{\text{iter}}$ is the average number of iterations required for the BAA to achieve the required convergence. Thus we have
%
%
\begin{equation}\label{eq_cc2}
\begin{split}
T_{\text{IADP-BAA}} = (NM^2 + N_{\text{iter}}(N-1)M^2) \log_2\Big(\frac{\beta_{\text{Max}}}{T_{\text{Range}}}\Big) + K.
\end{split}
\end{equation}
It can be observed that both the proposed IADP-Specific and IADP-BAA algorithms have overall complexity of $O(NM^2)$.\\
\indent Although the complexity in terms of the oracle notation is the same for both IADP-Specific and IADP-BAA, the total number of arithmetic operations required for IADP-BAA is greater than IADP-Specific because of the iterative nature of BAA. However, both methods have the same order of complexity as that of the Viterbi algorithm.

\section{Conclusion} 
\label{Conc}
We turn the class of constrained discrete resource allocation problems ubiquitous in wireless communication and signal processing to an unconstrained problem using an information-theoretic measure.  A dynamic programming framework assisted by this information measure is proposed to solve these problems. The proposed methodologies provide a common framework to address this class of problems. We provide theoretical analysis to establish near-optimality guarantees and show that the computational complexity order of the proposed algorithms is as good as the Viterbi algorithm.
 
\bibliographystyle{IEEEtran}
\bibliography{MilCom21_BibTexFile}
%
\end{document}